\newtheorem{theorem}{Theorem}[section]
\newtheorem{proposition}[theorem]{Proposition}
\theoremstyle{remark}
\newtheorem{example}[theorem]{Example}
\newtheorem{remark}[theorem]{Remark}
\def\mbf#1{\mathchoice{\hbox{\boldmath $\displaystyle #1$}}
        {\hbox{\boldmath $\textstyle #1$}}
        {\hbox{\boldmath $\scriptstyle #1$}}
        {\hbox{\boldmath $\scriptscriptstyle #1$}}}
\newcommand{\X}{{\mbf X}}
\newcommand{\bX}{{\mbf X}}
\newcommand{\x}{{\mbf x}}
\newcommand{\bI}{I}
\newcommand{\bY}{{\mbf Y}}
\newcommand{\bK}{{\mbf K}}
\newcommand{\Lp}[1][p]{\boldsymbol{L}^{#1}}
\renewcommand{\kappa}{\varkappa}
\newcommand{\R}{{\mathbb R}}
\newcommand{\E}{{\mathbf E}}
\newcommand{\sA}{\mathcal{A}}
\newcommand{\sZ}{\mathcal{Z}}
\renewcommand{\P}{\mathbf{P}}
\newcommand{\Prob}[1]{\P\{#1\}}
\newcommand{\one}{\mathbf{1}}
\newcommand{\eps}{\varepsilon}
\newcommand{\rhos}{\mathsf{R}}
\newcommand{\salg}{\mathfrak{F}}
\newcommand{\risk}{r}
\newcommand{\vecrisk}{\mathbf{r}}
\DeclareMathOperator{\cl}{cl}
\DeclareMathOperator{\conv}{\overline{conv}}
\DeclareMathOperator{\essinf}{essinf}
\newlength{\querylen}
\begin{document}

\title{Multivariate risk measures in the non-convex setting}

\author{Andreas Haier and Ilya Molchanov}

\date{\today}

\maketitle

\begin{abstract}
  The family of admissible positions in a transaction costs model is a
  random closed set, which is convex in case of proportional
  transaction costs. However, the convexity fails, e.g. in case of
  fixed transaction costs or when only a finite number of transfers
  are possible. The paper presents an approach to measure risks of
  such positions based on the idea of considering all selections of
  the portfolio and checking if one of them is acceptable. Properties
  and basic examples of risk measures of non-convex portfolios are
  presented.
\end{abstract}

\section{Introduction}
\label{sec:introduction}

Multivariate financial positions (portfolios) are usually described by
vectors in Euclidean space. However, if one aims to take into account
possible exchanges between the components of the portfolio, it is
necessary to consider the whole set of points in space that may be
attained from the original position by allowed exchanges. In other
words, considering a multiasset portfolio is indispensable from
specifying which transactions may be applied to its components. For
instance, if all components of the portfolio $C=(C^{(1)},\dots,C^{(d)})$
represent cash amounts in the same currency and transfers between the
components are unrestricted with short-selling permitted, then the
attainable positions are all random vectors such that the sum of their
components equals the sum of components of $C$. By allowing disposal
of assets (e.g., in the form of consumption), we arrive at the
half-space
\begin{displaymath}
  \Big\{\x\in\R^d:\; \sum_{i=1}^d x^{(i)}\leq \sum_{i=1}^d C^{(i)}\Big\}. 
\end{displaymath}
In this case and also in the presence of transaction costs not
influenced by $C$, the attainable positions are points from $C+\bK$,
where $\bK$ is the set of portfolios available at price zero, see
\cite{kab:saf09}. In other situations, possible attainable positions
may depend on $C$ in a non-linear way, for instance, when components
represent capitals of members of a group and admissible transfers
satisfy further restrictions, e.g., requiring that they do not cause
insolvency of an otherwise solvent agent, see \cite{haier:mol:sch15}.

In view of the above reasons, it is natural to represent multiasset
portfolios as random closed sets. Recall that a \emph{random closed set}
$\bX$ is a measurable map from a probability space $(\Omega,\salg,\P)$ to the
space of closed sets in $\R^d$ equipped with the $\sigma$-algebra generated
by the Fell topology. In other words, the measurability of $\bX$ means
that $\{\omega:\; \X(\omega)\cap K\neq\emptyset\}\in\salg$ for all
compact sets $K$ in $\R^d$, see \cite[Sec.~1.1.1]{mo1}. 

A random closed set $\bX$ is said to be \emph{lower} if almost all its
realisations are lower sets, that is, for almost all $\omega$,
$x\in\bX(\omega)$ and $y\leq x$ coordinatewisely imply that
$y\in\bX(\omega)$. A random closed set is said to be \emph{convex} if
almost all its realisations are convex. If $\bX$ is a random closed
set, then its closed convex hull $\conv(\bX)$ is also a random closed
set, see \cite[Th.~1.3.25]{mo1}. 

For $p\in[1,\infty]$, denote
by $\Lp(\bX)$ the family of $p$-integrable (essentially bounded if
$p=\infty$) random vectors $\xi$ such that $\xi\in\bX$ a.s.; such
random vectors are called \emph{$p$-integrable selections} of
$\bX$. Furthermore, $\Lp[0](\bX)$ is the family of all selections of
$\bX$; this family is not empty if $\bX$ is a.s.\ non-empty, see
\cite[Th.~1.4.1]{mo1}.
A random closed set $\bX$ is called \emph{$p$-integrable} if it admits at
least one $p$-integrable selection; it is called \emph{$p$-integrably
bounded} if 
\begin{displaymath}
  \|\bX\|=\sup\{\|x\|:\; x\in\bX\}
\end{displaymath}
is a $p$-integrable random variable for $p\in[1,\infty)$. The random
closed set $\bX$ is said to be essentially bounded if $\|\bX\|$ is
a.s.\ bounded by a constant. 

If $\bX$ is integrable (that is, 
$1$-integrable), its \emph{selection expectation} is defined by 
\begin{equation}
  \label{eq:5}
  \E\bX=\cl\{\E\xi:\; \xi\in\Lp[1](\bX)\},
\end{equation}
where $\cl(\cdot)$ denotes the topological closure in $\R^d$.
The closed \emph{Minkowski sum}
\begin{displaymath}
  \bX+\bY=\cl\{x+y:\; x\in\bX,y\in\bY\}
\end{displaymath}
of two random closed sets $\bX$ and $\bY$
is also a random closed set. Note that 
\begin{displaymath}
  -\bX=\{-x:\; x\in\bX\}
\end{displaymath}
denotes the reflection of $\bX$ with respect to the origin; this is
not the inverse operation to the addition.  We refer to \cite{mo1} for
further material concerning random closed sets.

The paper is organised as follows. In
Section~\ref{sec:select-risk-meas} we introduce the \emph{selection
  risk measure} of possibly non-convex random lower closed sets,
thereby generalising the setting of \cite{haier:mol:sch15} and
\cite{cas:mol14}.  Due to the non-convexity, it is not possible to
assess the risk by working with half-spaces containing the portfolio,
as it is the case in \cite{ham:hey10,ham:hey:rud11}.  In
Section~\ref{sec:fixed-points-expect} we discuss two basic set-valued
risk measures, one based on considering the fixed points of set-valued
portfolio, the other is given by the selection expectation of
$-\bX$. These two cases correspond to taking the negative essential
infimum and the negative expectation as the underlying numerical risk
measures. Section~\ref{sec:conv-law-invar} explores the cases when the
selection risk measure takes convex values and is law invariant. The
important case of fixed transaction costs is
considered in Section~\ref{sec:fixed-trans-costs}. Finally,
Section~\ref{sec:finite-trans-sets} deals with the case of only a
finite set of admissible transactions.

\section{Selection risk measure of non-convex portfolios}
\label{sec:select-risk-meas}

\subsection{Definition}
\label{sec:definition}

Fix $p\in\{0\}\cup [1,\infty]$ and a vector
$\vecrisk(\xi)=(\risk_1(\xi^{(1)}),\dots,\risk_d(\xi^{(d)}))$ of
monetary $\Lp$-risk measures applied to components of a $p$-integrable
random vector $\xi=(\xi^{(1)},\dots,\xi^{(d)})$. We refer to
\cite{delb12} and \cite{foel:sch04} for the facts concerning risk
measures for random variables.  Assume that $\vecrisk(0)=0$ and that
all components of $\vecrisk$ are finite on $p$-integrable random
variables. When saying that $\vecrisk$ is coherent or convex, we mean
that all its components are coherent or convex. The convexity or
coherency properties will be imposed only when necessary and will be
explicitly mentioned.

In many cases below, we consider the following basic numerical risk
measures. 
\begin{enumerate}[1.]
\item The negative essential infimum $\risk(\xi)=-\essinf \xi$, which
  is an $\Lp[\infty]$-risk measure.
\item The negative expectation $\risk(\xi)=-\E\xi$, an $\Lp[1]$-risk
  measure.
\item The Average Value-at-Risk (or Expected Shortfall in the
  non-atomic case) 
  \begin{displaymath}
    \risk(\xi)=- \frac{1}{\alpha} \int_0^{\alpha} F_\xi^{-1}(t) dt.
  \end{displaymath}
  at level $\alpha\in(0,1]$ for $\xi\in\Lp[1](\R)$, where $F_\xi$ is
  the cumulative distribution function of $\xi$ and $F_\xi^{-1}$ is
  the quantile function. 
\item The distortion risk measure
  \begin{equation}
    \label{eq:77}
    \risk(\xi)=-\int_0^1 F_{\xi}^{-1}(t)d\tilde{g}(t)
  \end{equation}
  for $\xi\in\Lp(\R)$, where $g:[0,1]\mapsto [0,1]$ is a (concave)
  distortion function, $\tilde{g}(t)=1-g(1-t)$ is the dual distortion
  function, and $p$ is chosen to ensure that the integral is finite.
\end{enumerate}

The \emph{selection risk measure} of a $p$-integrable lower random closed set
$\bX$ is defined as
\begin{equation}
  \label{eq:1}
  \rhos(\bX)=\cl \bigcup_{\xi\in\Lp(\bX)} (\vecrisk(\xi)+\R_+^d),
\end{equation}
where the union is taken over all $p$-integrable selections of $\bX$.
Thus, $x\in\rhos(\bX)$ if and only if $\liminf \vecrisk(\xi_n)\leq x$
for $\xi_n\in\Lp(\bX)$, $n\geq1$. The inequalities between vectors are
always coordinatewise and the lower limit is also taken
coordinatewisely. The selection risk measure takes values being upper
sets, and \eqref{eq:1} can be seen as the primal representation of
$\rhos(\bX)$. A dual representation is not feasible without imposing
convexity on $\bX$.

A random set $\bX$ is said to be acceptable if $0\in\rhos(\bX)$. In
other words, $\bX$ is acceptable if $\bX$ contains a sequence
of selections whose risk converges to zero. The monetary property of
$\vecrisk$ yields that $\rhos(\bX)$ is the set of all $x\in\R^d$ such
that $\bX+x$ is acceptable, that is,
\begin{displaymath}
  \rhos(\bX)=\{x:\; \rhos(\bX+x)\ni 0\}.
\end{displaymath}

\subsection{Properties of the selection risk measure}
\label{sec:prop-select-risk}

The selection risk measure was introduced in \cite{cas:mol14} for
convex $\bX$ and coherent $\vecrisk$. Some of its properties for
non-convex $\bX$ and general monetary $\vecrisk$ are easy-to-show
replica of those known in the convex coherent setting adopted in
\cite{cas:mol14}. 

\begin{theorem}
  \label{thr:general}
  The selection risk measure satisfies the following properties for
  $p$-integrable random lower closed sets $\bX$ and $\bY$. 
  \begin{enumerate}[i)]
  \item Monotonicity, that is, $\rhos(\bX)\subseteq \rhos(\bY)$ if
    $\bX\subseteq \bY$ a.s.
  \item Cash-invariance, that is, $\rhos(\bX+a)=\rhos(\bX)-a$ for
    all deterministic $a\in\R^d$.
  \item If $\vecrisk$ is homogeneous, then $\rhos$ is homogeneous,
    that is, $\rhos(c\bX)=c\rhos(\bX)$ for all deterministic $c>0$.
  \item If $\vecrisk$ is convex, then $\rhos$ is convex, that is, 
    \begin{equation}
      \label{eq:3}
      \rhos(\lambda\bX+(1-\lambda)\bY)
      \supseteq \lambda\rhos(\bX)+(1-\lambda)\rhos(\bY)
    \end{equation}
    for all deterministic $\lambda\in[0,1]$. 
  \end{enumerate}
\end{theorem}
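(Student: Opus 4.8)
The plan is to verify each of the four properties directly from the primal representation \eqref{eq:1}, leveraging the corresponding properties of the numerical risk vector $\vecrisk$ together with elementary manipulations of selections and closures.

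For monotonicity (i), the key observation is that if $\bX\subseteq\bY$ a.s., then every $p$-integrable selection of $\bX$ is also a $p$-integrable selection of $\bY$, i.e.\ $\Lp(\bX)\subseteq\Lp(\bY)$. Hence the union in \eqref{eq:1} defining $\rhos(\bX)$ is taken over a smaller index set than the one defining $\rhos(\bY)$, so $\bigcup_{\xi\in\Lp(\bX)}(\vecrisk(\xi)+\R_+^d)\subseteq\bigcup_{\xi\in\Lp(\bY)}(\vecrisk(\xi)+\R_+^d)$, and taking closures preserves the inclusion. For cash-invariance (ii), I would use the bijection $\xi\mapsto\xi+a$ between $\Lp(\bX)$ and $\Lp(\bX+a)$ (valid since $a$ is deterministic, so $\xi+a$ is $p$-integrable iff $\xi$ is, and $\xi\in\bX$ a.s.\ iff $\xi+a\in\bX+a$ a.s.), combined with the cash-invariance of each component $\risk_i$, which gives $\vecrisk(\xi+a)=\vecrisk(\xi)-a$. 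Substituting into \eqref{eq:1} yields $\rhos(\bX+a)=\cl\bigcup_{\xi}(\vecrisk(\xi)-a+\R_+^d)=\bigl(\cl\bigcup_\xi(\vecrisk(\xi)+\R_+^d)\bigr)-a=\rhos(\bX)-a$, using that translation by the fixed vector $-a$ commutes with closure. For homogeneity (iii), similarly $\xi\mapsto c\xi$ is a bijection between $\Lp(\bX)$ and $\Lp(c\bX)$ for deterministic $c>0$, homogeneity of $\vecrisk$ gives $\vecrisk(c\xi)=c\vecrisk(\xi)$, and $c(\vecrisk(\xi)+\R_+^d)=c\vecrisk(\xi)+\R_+^d$ since $c>0$; again scaling by $c>0$ commutes with closure.

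For convexity (iv), the structure is slightly more involved. Fix $\lambda\in[0,1]$ and take $x\in\rhos(\bX)$, $y\in\rhos(\bY)$; I want to show $\lambda x+(1-\lambda)y\in\rhos(\lambda\bX+(1-\lambda)\bY)$. The idea is: given $\xi\in\Lp(\bX)$ and $\eta\in\Lp(\bY)$, the convex combination $\zeta=\lambda\xi+(1-\lambda)\eta$ is a $p$-integrable selection of $\lambda\bX+(1-\lambda)\bY$ (here one uses that realisations of $\lambda\bX+(1-\lambda)\bY$ contain all such combinations, by definition of the scaled Minkowski sum, together with measurability/integrability of $\zeta$). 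Convexity of each $\risk_i$ gives $\vecrisk(\zeta)\leq\lambda\vecrisk(\xi)+(1-\lambda)\vecrisk(\eta)$ coordinatewise, hence $\vecrisk(\zeta)+\R_+^d\supseteq\lambda(\vecrisk(\xi)+\R_+^d)+(1-\lambda)(\vecrisk(\eta)+\R_+^d)$. Taking unions over $\xi$ and $\eta$ and then closures, and using that $\lambda A+(1-\lambda)B$ for fixed sets behaves well under closure (the closed Minkowski combination of the two closures contains the Minkowski combination of the sets), yields \eqref{eq:3}. The main obstacle — and the only place where genuine care is needed — is the handling of the closures in (iv): one must make sure that $\lambda\,\cl(A)+(1-\lambda)\,\cl(B)\subseteq\cl(\lambda A+(1-\lambda)B)$, which follows from continuity of the map $(u,v)\mapsto\lambda u+(1-\lambda)v$, and that a point obtained as a limit $\liminf\vecrisk(\xi_n)\le x$ together with $\liminf\vecrisk(\eta_n)\le y$ can be combined to produce selections $\zeta_n$ with $\liminf\vecrisk(\zeta_n)\le\lambda x+(1-\lambda)y$; this is where the convexity inequality for $\vecrisk$ and the coordinatewise lower limits must be reconciled, but it is routine once the selection $\zeta_n=\lambda\xi_n+(1-\lambda)\eta_n$ is in place. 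Properties (i)--(iii) present no real difficulty and are essentially bookkeeping.
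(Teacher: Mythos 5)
Your proposal is correct and follows essentially the same route as the paper: parts (i)--(iii) are the same bookkeeping with selections and closures that the paper dismisses as straightforward, and for (iv) you use exactly the paper's key step, namely that $\lambda\xi+(1-\lambda)\eta$ is a $p$-integrable selection of $\lambda\bX+(1-\lambda)\bY$ whose risk is dominated, by convexity of $\vecrisk$, by $\lambda\vecrisk(\xi)+(1-\lambda)\vecrisk(\eta)$. The closure/liminf reconciliation you flag is handled just as tersely in the paper's own proof, so there is nothing to add.
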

\begin{proof}
  We prove only the convexity, the rest is straightforward. All
  elements of the set on the right-hand side of \eqref{eq:3} are
  coordinatewisely larger than or equal to 
  \begin{displaymath}
    \liminf \big(\lambda\vecrisk(\xi_n)+(1-\lambda)\vecrisk(\eta_n)\big)
  \end{displaymath}
  for $\xi_n\in\Lp(\bX)$ and $\eta_n\in\Lp(\bY)$, $n\geq1$. Then it
  suffices to note that this convex combination of risks of $\xi$ and
  $\eta$ dominates $\vecrisk(\lambda\xi_n+(1-\lambda)\eta_n)$, which is an
  element of the left-hand side of \eqref{eq:3}.
\end{proof}

The monotonicity property of $\vecrisk$ yields that
$\rhos(C+\R_-^d)=\vecrisk(C)+\R_+^d$ for $C\in\Lp(\R^d)$. The
selection risk measure is said to be \emph{coherent} if it is
homogeneous and convex; this is the case if $\vecrisk$ has all
coherent components. If $\vecrisk$ is coherent, $C$ is a
$p$-integrable random vector, and $\bX$ is a $p$-integrable random
lower closed set, then
\begin{equation}
  \label{eq:10}
  \rhos(C+\bX)\supseteq \vecrisk(C)+\rhos(\bX).
\end{equation}
This is easily seen from \eqref{eq:3} choosing $\lambda=1/2$,
$\bY=C+\R_-^d$, and using the homogeneity of $\vecrisk$. Note that the
equality in \eqref{eq:10} is not guaranteed even if $\bX$ is a
deterministic set. Still, in this case, it provides a useful
acceptability condition: $C+\bX$ is acceptable if
$\vecrisk(C)+\rhos(\bX)\ni 0$. 

A general set-valued function (not necessarily
constructed using selections) defined for $p$-integrable random sets
is said to be monotonic, cash invariant, homogeneous or convex if it
satisfies the corresponding properties from
Theorem~\ref{thr:general}. The set-valued (selection) risk measure is
called \emph{law invariant} if its values on identically distributed
random sets coincide.

\subsection{Choice of selections}
\label{sec:choice-selections}

The definition of the selection risk measure involves taking union
over all $p$-integrable selections of $\bX$. This family may be very
rich even for simple random closed sets. In the following, we discuss
general approaches suitable to reduce the family of selections needed
to determine the selection risk measure. 

With a lower closed set $F$ we associate the set $\partial^+F$ of its
\emph{Pareto optimal} points, that is, the set of points $x\in F$ such
that $y\geq x$ for $y\in F$ is only possible if $y=x$. If $\bX$ is a
random lower closed convex set, then the set $\partial^+\bX$ of Pareto
optimal points of $\bX$ is a random closed set, see
\cite[Lemma~3.1]{haier:mol:sch15}. In the non-convex case, the cited
result establishes that $\partial^+\bX$ is graph measurable, so that
its closure $\cl \partial^+\bX$ is a random closed set, see
\cite[Prop.~2.6]{lep:mol17}. 
If $\partial^+\bX$ is closed and $p$-integrable, then it is possible
to reduce the union in \eqref{eq:1} to selections of
$\partial^+\bX$.

A lower random closed set $\bX$ is said to be \emph{quasi-bounded} if
$\partial^+\bX$ is essentially bounded; $\bX$ is $p$-integrably
quasi-bounded if $\|\partial^+\bX\|$ is $p$-integrable.

Consider
\begin{equation}
  \label{eq:8}
  \bX=F_1\cup\cdots \cup F_m,
\end{equation}
where $F_1,\dots,F_m$ are deterministic lower \emph{convex closed
  cones}. For the following result, assume that $\vecrisk$ is convex
law invariant, and the probability space is non-atomic. In this case,
$\vecrisk$ satisfies the dilatation monotonicity property, that is,
$\vecrisk(\xi)$ dominates coordinatewisely the risk of a conditional
expectation of $\xi$, see \cite[Cor.~4.59]{foel:sch04} and
\cite{leit04}.


\begin{proposition}
  \label{prop:det-set}
  If $\bX$ is a deterministic set 
  given by \eqref{eq:8}, then it is possible to reduce the
  union in \eqref{eq:1} to selections $\xi=\sum_{i=1}^m x_i\one_{A_i}$
  for deterministic $x_i\in F_i$, $i=1,\dots,m$, and partitions
  $\sA=\{A_1,\dots,A_m\}$ of the probability space.
\end{proposition}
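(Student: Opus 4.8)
The plan is to establish that, even before taking the closure, the union in \eqref{eq:1} over all $p$-integrable selections of $\bX$ already coincides with the union restricted to selections of the stated form $\xi=\sum_{i=1}^m x_i\one_{A_i}$; the closures then agree a fortiori. One inclusion is trivial: such a $\xi$ takes the value $x_i\in F_i\subseteq\bX$ on $A_i$, so it is a genuine selection of $\bX$, and being constant on each atom it is $p$-integrable with $\vecrisk(\xi)$ finite. The substance is the reverse inclusion, for which it suffices to dominate the risk of an arbitrary selection by that of a simple one.

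So fix $\xi\in\Lp(\bX)$. Since every $F_i$ is closed, the events $\{\xi\in F_i\}$ are measurable, and because $\bX=F_1\cup\cdots\cup F_m$ with $\xi\in\bX$ a.s.\ they cover $\Omega$ up to a null set; disjointifying them produces a measurable partition $A_1,\dots,A_m$ with $A_i\subseteq\{\xi\in F_i\}$. Put $\salg_0=\sigma(A_1,\dots,A_m)$ and $\zeta=\E[\xi\mid\salg_0]$. On each $A_i$ of positive probability, $\zeta$ equals the constant $x_i=\P(A_i)^{-1}\E[\xi\one_{A_i}]$ (on the exceptional null sets one may take $x_i$ to be any point of $F_i$), so $\zeta=\sum_{i=1}^m x_i\one_{A_i}$ has exactly the required form. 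The one step needing a short argument is that $x_i\in F_i$: as $\xi\in F_i$ a.s.\ on $A_i$ and $F_i$ is closed and convex, for any half-space $\{y:\langle a,y\rangle\leq c\}$ containing $F_i$ one has $\langle a,\xi\rangle\leq c$ a.s.\ on $A_i$ and hence $\langle a,x_i\rangle\leq c$; intersecting over all such half-spaces gives $x_i\in F_i$.

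It remains to compare risks. Because $\vecrisk$ is convex and law invariant and the probability space is non-atomic, dilatation monotonicity applies coordinatewise to the conditional expectation, yielding $\vecrisk(\zeta)=\vecrisk(\E[\xi\mid\salg_0])\leq\vecrisk(\xi)$. Therefore $\vecrisk(\xi)+\R_+^d\subseteq\vecrisk(\zeta)+\R_+^d$, and taking the union over $\xi\in\Lp(\bX)$ on the left shows it is contained in --- hence equal to --- the union over simple selections. I expect the only real obstacle to be the verification, carried out above, that the conditional expectation stays in $\bX$ (indeed in the correct piece $F_i$); beyond that the proof is a direct application of dilatation monotonicity, which is exactly why convexity, law invariance and non-atomicity were imposed just before the statement.
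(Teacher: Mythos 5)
Your proof is correct and follows essentially the same route as the paper: decompose a general selection over the pieces $F_i$ via a measurable partition, replace it by its conditional expectation given that partition, and invoke dilatation monotonicity of the convex law-invariant $\vecrisk$ on the non-atomic space. You merely make explicit two steps the paper leaves implicit (the disjointification of $\{\xi\in F_i\}$ and the verification that the conditional barycenter stays in the closed convex set $F_i$), which is a welcome but not substantively different elaboration.
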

\begin{proof}
  Consider $\xi=\sum \eta_i\one_{A_i}$ for $\eta_i\in\Lp(F_i)$,
  $i=1,\dots,m$. By the dilatation monotonicity, $\vecrisk(\xi)$
  dominates the risk of the conditional expectation of $\xi$ given
  $\sA$. Thus, it is possible to replace $\eta_i$ by its conditional
  expectation, which is also a point in $F_i$.
\end{proof}

In the convex setting, if $\bX$ is the sum of $C$ and a convex closed
set $F$, then the union in \eqref{eq:1} can be reduced to the
selections that are measurable with respect to the $\sigma$-algebra
generated by $C$. 

\section{Fixed points and the expectation}
\label{sec:fixed-points-expect}

For a random closed set $\bX$,
\begin{displaymath}
  F_\bX=\{x:\; \Prob{x\in\bX}=1\}
\end{displaymath}
denotes the set of its \emph{fixed points}. The set $F_\bX$ is a lower
closed set if $\bX$ is a lower closed set, it is convex if $\bX$ is
convex.

\begin{proposition}
  \label{prop:ess-inf}
  Let $\bX$ be a $p$-integrable random lower closed set.  For the
  selection risk measure generated by any monetary risk measure
  $\vecrisk$, we have
  \begin{equation}
    \label{eq:6}
    -F_\bX\subseteq \rhos(\bX). 
  \end{equation}
  If all components of $\vecrisk$
  are the negative of the essential infimum, then $\rhos(\bX)$ equals
  the set of fixed points of $-\bX$.
\end{proposition}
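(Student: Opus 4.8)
The plan is to prove the two claims separately. For the inclusion $-F_\bX\subseteq\rhos(\bX)$, I would fix $x\in F_\bX$, so $-x$ lies in $\bX$ almost surely, i.e.\ the constant random vector $\xi\equiv -x$ is a selection of $\bX$; since $\vecrisk$ is finite on $p$-integrable random variables and $-x$ is bounded (hence $p$-integrable for every $p$), $\xi\in\Lp(\bX)$. By cash-invariance applied to the constant vector (equivalently, by $\vecrisk(0)=0$ and the monetary property componentwise), $\vecrisk(-x)=x$. Then $x=\vecrisk(\xi)\in\vecrisk(\xi)+\R_+^d\subseteq\rhos(\bX)$ by the definition \eqref{eq:1}. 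This handles \eqref{eq:6} for an arbitrary monetary $\vecrisk$.

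For the second claim, specialise to $\risk_i(\cdot)=-\essinf(\cdot)$ for each $i$. One direction, $-F_\bX\subseteq\rhos(\bX)$, is already covered; by monotonicity of the essential infimum, $F_{-\bX}=-F_\bX$, so it remains to prove $\rhos(\bX)\subseteq F_{-\bX}=-F_\bX$. Take $x\in\rhos(\bX)$. By the primal characterisation stated after \eqref{eq:1}, there exist selections $\xi_n\in\Lp(\bX)$ with $\liminf_n\vecrisk(\xi_n)\leq x$ coordinatewise, i.e.\ for each coordinate $i$, $\liminf_n\bigl(-\essinf\xi_n^{(i)}\bigr)\leq x^{(i)}$, equivalently $\limsup_n\essinf\xi_n^{(i)}\geq -x^{(i)}$. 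I would then pass to a common subsequence along which, for each $i$, $\essinf\xi_n^{(i)}\to \ell^{(i)}\geq -x^{(i)}$; since $\xi_n\in\bX$ a.s.\ and $\xi_n^{(i)}\geq\essinf\xi_n^{(i)}$ a.s., and $\bX$ is lower, the constant vector with coordinates $\essinf\xi_n^{(i)}$ is a selection of $\bX$ a.s. Because $\bX$ is closed and lower, passing to the limit gives that the constant vector $-(-\ell)=\ell$ with coordinates $\ell^{(i)}$ lies in $\bX$ a.s., hence $-\ell\in F_{-\bX}$; then $x\geq -\ell$ componentwise, i.e.\ $-x\leq \ell$, and since $F_{-\bX}$ is an upper set (as $-\bX$ is upper, being the reflection of a lower set) we get $-x\in F_{-\bX}$, that is, $x\in -F_\bX$.

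The main technical point to be careful about is the limiting step: I must check that $(\essinf\xi_n^{(1)},\dots,\essinf\xi_n^{(d)})\in\bX$ a.s., which uses precisely that $\bX$ is \emph{lower} (the coordinatewise infimum of a point of $\bX$ with itself pushed down stays in $\bX$), and that taking $n\to\infty$ keeps the limit in $\bX$, which uses that $\bX$ is \emph{closed}. A minor subtlety is that $\limsup\essinf\xi_n^{(i)}$ could in principle be $+\infty$; but then $-x^{(i)}\le +\infty$ is vacuous on that coordinate, and one shows $x\in-F_\bX$ anyway using that $F_{-\bX}$ is a nonempty upper set intersected appropriately — more cleanly, one notes $\esssup$-type blow-up cannot occur coordinatewise for a fixed selection structure, or simply treats each coordinate's target $-x^{(i)}$ as a genuine real and uses that $-x\le\ell$ together with the upper-set property. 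I expect this is the only place requiring genuine care; everything else is a direct unwinding of the definition \eqref{eq:1} together with the monetary and monotonicity properties of the essential infimum.
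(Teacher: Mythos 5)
Your overall strategy matches the paper's: constant selections give $-F_\bX\subseteq\rhos(\bX)$, and for the essential-infimum case the key observation is that the constant vector of essential infima of a selection is itself a fixed point of $\bX$ (because $\bX$ is lower and closed); the paper's own proof of the second claim is a very terse version of exactly this. Two points need repair. The first is a sign slip: for $x\in F_\bX$ you should take the constant selection $\xi\equiv x$ and conclude $\vecrisk(\xi)=-x\in\rhos(\bX)$; as written, ``fix $x\in F_\bX$, so $-x\in\bX$ a.s.''\ is not what $F_\bX$ says (you are in effect parametrising by $x\in -F_\bX$). The second is more substantive: the step ``pass to a common subsequence along which $\essinf\xi_n^{(i)}\to\ell^{(i)}\geq -x^{(i)}$ for every $i$'' does not follow from the per-coordinate statement $\limsup_n\essinf\xi_n^{(i)}\geq -x^{(i)}$ alone, since the coordinates may achieve their limsups along disjoint, out-of-phase subsequences (e.g.\ $\essinf\xi_n^{(1)}$ alternating $0,-1,0,-1,\dots$ while $\essinf\xi_n^{(2)}$ alternates $-1,0,-1,0,\dots$), in which case no single subsequence works. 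The fix is to unwind the closure in \eqref{eq:1} directly: $x\in\rhos(\bX)$ yields selections $\xi_n$ with $\vecrisk(\xi_n)\leq x+\eps_n\one$ \emph{simultaneously in all coordinates}, where $\eps_n\to0$. Then $z_n=(\essinf\xi_n^{(1)},\dots,\essinf\xi_n^{(d)})$ satisfies $z_n\geq -x-\eps_n\one$ and $z_n\in F_\bX$ by your lower-set argument; since $F_\bX$ is a lower set, $-x-\eps_n\one\in F_\bX$, and since $F_\bX$ is closed, $-x\in F_\bX$, i.e.\ $x\in -F_\bX$. This version also removes the need for any case distinction when some $\essinf\xi_n^{(i)}\to+\infty$. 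With these repairs your proof is correct and is essentially an expanded form of the paper's argument.
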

\begin{proof}
  By taking constant selections $\xi=x\in F_\bX$ in \eqref{eq:1} and
  using the fact that $\vecrisk(x)=-x$, we see that \eqref{eq:6}
  holds.  

  If $\Lp[\infty](\bX)\neq\emptyset$, then $F_\bX\neq\emptyset$, since
  $\bX$ is a lower set.  Choosing $\vecrisk$ with all components
  being negative of the essential infima, it is easily seen that $\bX$
  is acceptable if it has a selection with all a.s. non-negative
  components. In this case, $0\in\bX$ a.s., whence $0\in F_\bX$. Note
  also that $F_{-\bX}=-F_\bX$.
\end{proof}

The set of fixed points is a coherent selection risk measure, which is law
invariant and not necessarily convex-valued.

\begin{example}
  \label{ex:fx-non-convex}
  The convex hull of $F_\bX$ is a (possibly, strict) subset of the set
  of fixed points of $\conv(\bX)$.  Let $\bX$ be a random set in
  $\R^2$ which equally likely
  take values $\{(-a,a),(a,-a)\}+\R_-^2$ and
  $\{(-b,b),(b,-b)\}+\R_-^2$ for $0<a<b$. Then
  $F_\bX=\{(-b,a),(a,-b)\}+\R_-^2$, while the set of fixed points of
  $\conv(\bX)$ is the sum of the segment with end-points
  $(-a,a),(a,-a)$ and $\R_-^2$.
\end{example}

\begin{example}
  The set of fixed points appears also in the following context. Let
  $\Omega=\{\omega_1,\dots,\omega_n\}$ be a finite probability space,
  and let all components of $\vecrisk$ be the Average Value-at-Risk at
  level $\alpha\leq\P(\{\omega_i\})$, $i=1,\dots,n$. Then
  \begin{displaymath}
    \rhos(\X)=-F_\bX=-\bigcap_{i=1}^{n} \bX(\omega_i).
  \end{displaymath}
  Indeed, since $\P(\{\omega_i\})\geq \alpha$ for all $i$, we have
  $\risk(\xi^{(j)})=-\min\{\xi^{(j)}(\omega_i),i=1,\dots,n\}$ for any
  $\xi\in\Lp[1](\bX)$. Because each $\bX(\omega_i)$ is a lower set, we
  have $-\vecrisk(\xi)\in \bX(\omega_i)$ for all $i$. To show the
  reverse inclusion, assume that $x\in F_\bX$. Then $\xi=x$ is a
  deterministic selection of $\bX$, whence $-x=\vecrisk(\xi)\in
  \rhos(\bX)$.
\end{example}

\medskip

If $p=1$ and $\vecrisk(\xi)=-\E\xi$ is the negative expectation of
$\xi$, then $\rhos(\bX)$ becomes the selection expectation of $(-\bX)$.
Note that $\rhos(\bX)=-\E\bX$ is a coherent selection risk measure,
which is law invariant on convex random sets, but may be not law
invariant on non-convex ones. Indeed, if the non-convex deterministic
set $F$ is considered a random closed set defined on the trivial
probability space, then $\E F=F$, while $\E F=\conv(F)$ if the
underlying probability space is non-atomic, see \cite[Th.~2.1.26]{mo1}.

It might be tempting to define a set-valued risk measure by taking
intersection of expected random sets with respect to varying
probability measures. This would correspond to the construction of a
convex function by taking the supremum of linear ones. However, taking
expectation results in convex values for the risk measure if the
probability space is non-atomic; otherwise, it depends on the atomic
structure of the space. Furthermore, even in the convex setting, such
a construction might not correspond to the existence of an acceptable
selection from $\bX$, as the following remark shows.

\begin{remark}
  For any family $\sZ\subset\Lp[q](\R_+^d)$ such that $\E\zeta=1$ for
  all $\zeta\in\sZ$, define
  \begin{equation}
    \label{eq:4}
    \rhos_\sZ(\bX)=\bigcap_{\zeta\in\sZ} \E(-\zeta\bX),
  \end{equation}
  where $\zeta\bX=\{\zeta x:\; x\in\bX\}$.  Note that we use
  vector notation, e.g., $\E\zeta=1$ means that all components
  of $\zeta$ have mean $1$, and
  \begin{displaymath}
    \zeta\xi=(\zeta^{(1)}\xi^{(1)},\dots,\zeta^{(d)}\xi^{(d)})
  \end{displaymath}
  is the coordinatewise product of $\zeta$ and $\xi$.  The so defined
  $\rhos_\sZ(\cdot)$ satisfies all properties from
  Theorem~\ref{thr:general}. However, $\rhos_\sZ$ in general is not a
  selection risk measure. Indeed, by letting $\bX=\xi+\R_-^d$, we see
  that the corresponding coherent vector-valued risk measure is given by
  \begin{displaymath}
    \vecrisk(\xi)=\sup_{\zeta\in\sZ} \E(-\zeta\xi),\quad
    \xi\in\Lp(\R^d). 
  \end{displaymath}
  Assume that $\partial^+\bX$ is $p$-integrably bounded, so that
  $\E(\zeta\bX)$ is closed for all $\zeta\in\Lp[q]$. Then
  $0\in\rhos_\sZ(\bX)$ if and only if $0\in\E(-\zeta\bX)$ for all
  $\zeta\in\sZ$, equivalently, for each $\zeta\in\sZ$ there is
  $\xi_\zeta\in\Lp(\partial^+\bX)$ such that $\E(-\zeta\xi_\zeta)\leq
  0$. Since these selections $\xi_\zeta$ may be different for
  different $\zeta$, we cannot infer that $\bX$ is acceptable with
  respect to a selection risk measure. Indeed, the acceptability of
  $\bX$ requires the existence of a \emph{single} selection $\xi\in\Lp(\bX)$
  such that $\E(-\zeta\xi)\leq 0$ for all $\zeta$.
  Thus, $\rhos_\sZ$ is an example of a coherent set-valued risk
  measure, which, however, is not necessarily a selection one. The
  acceptability of $\bX$ under $\rhos_\sZ$ does not guarantee the
  existence of an acceptable selection of $\bX$. Furthermore, this
  risk measure does not distinguish between $\bX$ and its convex
  hull. 
\end{remark}

\section{Convexity and law invariance}
\label{sec:conv-law-invar}

The monotonicity property yields that $\rhos(\bX)$ is a subset of
$\rhos(\conv(\bX))$.  It is well known that the
selection expectation of an integrable random closed set is convex if
the underlying probability space is non-atomic, see \cite[Th.~2.1.26]{mo1}. This
result follows from Lyapunov's theorem on ranges of vector-valued
measures. The same holds for selection risk measures of convex random
sets, if the underlying risk measure $\vecrisk$ is convex, see
\cite[Th.~3.4]{cas:mol14}.  This is however not the case for
non-convex arguments, see Example~\ref{ex:fx-non-convex} and
Section~\ref{sec:fixed-trans-costs-1}.

Still, in some cases $\rhos(\bX)$ is convex even for non-convex $\bX$.
Assume that $p\in[1,\infty]$, and the components of
$\vecrisk=(\risk_1,\dots,\risk_d)$ are $\sigma(\Lp,\Lp[q])$-lower
semicontinuous convex risk measures, so that
\begin{equation}
  \label{eq:2}
  \risk_i(\xi)=\sup_{\zeta\in\Lp[q](\R_+), \E\zeta=1}
  \Big(\E(-\zeta\xi)-\alpha_i(\zeta)\Big), \quad \xi\in\Lp(\R),
  \quad i=1,\dots,d,
\end{equation}
where 
$\alpha_i:\Lp[q](\R_+)\mapsto(-\infty,\infty]$, $i=1,\dots,d$, are the
penalty functions corresponding to the components of $\vecrisk$. The
following result generalises Lyapunov's theorem in the sublinear
setting, see also \cite{sag09}. 

\begin{theorem}
  \label{thr:convexify}
  Let $(\Omega,\salg,\P)$ be a non-atomic probability space, and let
  the components of $\vecrisk$ admit representation \eqref{eq:2} with
  $\alpha_1(\zeta),\dots,\alpha_d(\zeta)$ being all infinite unless
  $\zeta$ belongs to a finite family from $\Lp[q](\R)$. Then
  $\rhos(\X)$ is convex.
\end{theorem}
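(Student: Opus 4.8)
The plan is to show that $\rhos(\X)$ is convex by taking two points $x_0, x_1 \in \rhos(\X)$, a weight $\lambda \in [0,1]$, and producing a sequence of selections of $\X$ whose risks approach $\lambda x_0 + (1-\lambda)x_1$ from below (coordinatewise). By the definition \eqref{eq:1}, there are selections $\xi_n^0, \xi_n^1 \in \Lp(\X)$ with $\liminf \vecrisk(\xi_n^0) \le x_0$ and $\liminf \vecrisk(\xi_n^1) \le x_1$. The natural candidate for the interpolating selection is \emph{not} the pointwise convex combination $\lambda \xi_n^0 + (1-\lambda)\xi_n^1$ — that need not lie in $\X$, since $\X$ is non-convex — but rather a selection of the form $\xi_n = \xi_n^0 \one_A + \xi_n^1 \one_{A^c}$ for a cleverly chosen event $A$ with $\P(A)$ related to $\lambda$. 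This is a genuine selection of $\X$ (it picks, pointwise, either a point of $\X$ or another point of $\X$), and the non-atomicity of $(\Omega,\salg,\P)$ gives us freedom in choosing $A$.

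The key step is to choose $A$ so that, for \emph{every} dual element $\zeta$ in the relevant finite family (the one outside which all penalty functions are $+\infty$), the spliced selection has the right expected value against $\zeta$. Concretely, fix the finite family $\{\zeta_1,\dots,\zeta_N\} \subset \Lp[q](\R)$ carrying the penalty support across all $d$ coordinates. We want to find $A \in \salg$ such that for each $j = 1,\dots,N$ and each coordinate $i$,
\begin{displaymath}
  \E\big(\zeta_j^{(i)} \xi_n^{0,(i)} \one_A\big) + \E\big(\zeta_j^{(i)} \xi_n^{1,(i)} \one_{A^c}\big)
  = \lambda\, \E\big(\zeta_j^{(i)} \xi_n^{0,(i)}\big) + (1-\lambda)\, \E\big(\zeta_j^{(i)} \xi_n^{1,(i)}\big),
\end{displaymath}
or at least an inequality in the direction that makes $-\E(\zeta_j \xi_n) - \alpha(\zeta_j)$ dominated by $\lambda\vecrisk(\xi_n^0) + (1-\lambda)\vecrisk(\xi_n^1)$. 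Because there are only finitely many constraints (indexed by $j$ and $i$, hence at most $Nd$ of them), this is exactly the situation covered by Lyapunov's convexity theorem: the vector measure $B \mapsto \big(\E(\zeta_j^{(i)} \xi_n^{0,(i)} \one_B),\ \E(\zeta_j^{(i)} \xi_n^{1,(i)} \one_B)\big)_{i,j}$ on the non-atomic space has convex range, so one can find $A$ with $\P(A) = \lambda$ realising each of these integrals as the exact $\lambda$-fraction of its value on $\Omega$. The penalty terms $\alpha_i(\zeta_j)$ are constants that interpolate trivially. Then taking the supremum over $\zeta \in \{\zeta_1,\dots,\zeta_N\}$ (all other $\zeta$ contribute $-\infty$) yields $\vecrisk(\xi_n) \le \lambda \vecrisk(\xi_n^0) + (1-\lambda)\vecrisk(\xi_n^1)$ coordinatewise, and passing to the $\liminf$ gives $\lambda x_0 + (1-\lambda)x_1 \in \rhos(\X)$.

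The main obstacle is the bookkeeping around the $\liminf$ and the fact that the two approximating sequences need not be indexed compatibly — one must be careful that the single event $A$ can be chosen uniformly (independently of $n$) or else allow $A = A_n$ to vary, which is fine since Lyapunov's theorem is applied afresh for each $n$. A secondary subtlety is that representation \eqref{eq:2} involves $\esssup$-type suprema that are attained only in the limit, and that the integrability requirements ($\zeta_j \in \Lp[q]$, $\xi_n^k \in \Lp$, Hölder) must be checked so that all the expectations above are finite — but these are routine given the standing assumptions $p \in [1,\infty]$ and the finiteness of $\vecrisk$ on $p$-integrable random variables. The conceptual content is entirely the reduction, via finiteness of the penalty support, of an infinite-dimensional convexity obstruction to a finite-dimensional Lyapunov argument.
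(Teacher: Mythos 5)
Your proposal is correct and follows essentially the same route as the paper: for each pair of selections you apply Lyapunov's convexity theorem to the finite-dimensional vector measure $B\mapsto(\E(\one_B\zeta_j\xi^0),\E(\one_B\zeta_j\xi^1))_j$ built from the finitely many dual elements carrying the penalty, obtain a set $A$ realising the exact $\lambda$-fraction, and splice $\xi=\xi^0\one_A+\xi^1\one_{A^c}$ so that $\vecrisk(\xi)$ is dominated by the convex combination of the risks. The only (harmless) differences are that you work with approximating sequences and pass to the $\liminf$ explicitly, whereas the paper argues at the level of individual selections and routes the final estimate through $\vecrisk(\lambda\xi'+(1-\lambda)\xi'')$ before invoking convexity of $\vecrisk$.
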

\begin{proof}
  We need to show that for two selections $\xi',\xi''\in\Lp(\bX)$ and
  $\lambda \in [0,1]$, there is $\xi\in\Lp(\bX)$ such that
  $\vecrisk(\xi)\leq\lambda\vecrisk(\xi')+(1-\lambda)\vecrisk(\xi'')$. In
  view of the convexity of $\vecrisk$, it suffices to ensure that
  \begin{displaymath}
    \vecrisk(\xi)\leq \vecrisk(\lambda\xi'+(1-\lambda)\xi'').
  \end{displaymath}

  Assume that all components of $\alpha(\zeta)$ are infinite for
  $\zeta$ outside a finite set
  $\sZ=\{\zeta_1,\dots,\zeta_m\}$. Consider the mapping which assigns
  to each measurable subset $A\subseteq \Omega$ the vector
  \begin{displaymath}
    \upsilon(A)=(\E(-\one_A\zeta_1\xi'),\dots,\E(-\one_A\zeta_m\xi'),
    \E(-\one_A\zeta_1\xi''),\dots,\E(-\one_A\zeta_m\xi''))\in \R^{2md}.
  \end{displaymath}
  It is easily verified that this map is a vector-valued measure. By
  Lyapunov's theorem, its image is closed convex, hence there is a
  measurable subset $A\subseteq \Omega$ such that
  \begin{displaymath}
    \upsilon(A)=\lambda
    \upsilon(\Omega)+(1-\lambda)\upsilon(\emptyset)
    =\lambda\upsilon(\Omega).
  \end{displaymath}
  Then $\E(-\one_A\zeta_i\xi')=\lambda\E(-\zeta_i\xi')$ and
  $\E(-\one_A\zeta_i\xi'')=\lambda\E(-\zeta_i\xi'')$ for all $i$. Hence,
  \begin{displaymath}
    \E(-\lambda\zeta_i\xi'-(1-\lambda)\zeta_i\xi'')
    =\E(-\zeta_i(\xi''+\one_A(\xi'-\xi'')))=\E(-\zeta_i\xi),
  \end{displaymath}
  where $\xi=\xi'\one_A+\xi''\one_{A^c}$ is a selection of $\bX$. Therefore,
  \begin{displaymath}
    \E(-\zeta_i\xi)-\alpha(\zeta_i)
    =\E(-\zeta_i(\lambda\xi'+(1-\lambda)\xi''))-\alpha(\zeta_i)
    \leq \vecrisk(\lambda\xi'+(1-\lambda)\xi'')
  \end{displaymath}
  for all $i$, so $\xi$ is indeed the required selection. 
\end{proof}

\begin{remark}
  For a \emph{deterministic} lower closed set $F$, the selection risk
  measure $\rhos(F)$ is not always equal to $(-F)$. For instance, this
  is not the case in the framework of Theorem~\ref{thr:convexify}, or
  in the context of fixed transaction costs in
  Section~\ref{sec:fixed-trans-costs-1}.  The set $F$ is said to be
  $\vecrisk$-convex (or risk-convex for $\vecrisk$), if with any $x_1,x_2\in F$ and
  any $A\subseteq \Omega$ we also have $-\vecrisk(\one_A x_1+
  \one_{A^c} x_2)\in F$. Then $F$ is $\vecrisk$-convex if and only if
  $\rhos(F)=-F$. It is easy to see that the intersection of risk
  convex sets is also risk convex.  If $\vecrisk$ is the negative
  expectation and the probability space is non-atomic, the risk
  convexity corresponds to the usual notion of convexity. If
  $\vecrisk$ is the negative essential infimum, each lower set is risk
  convex.
\end{remark}

\begin{remark}
  Consider $\bX=\{\xi,\eta\}+\R_-^d$. Then $\rhos(\bX)$ is convex if
  and only if, for each $t\in(0,1)$, there exists $A\in\salg$ such
  that 
  \begin{displaymath}
    t\vecrisk(\xi)+(1-t)\vecrisk(\eta)\geq
    \vecrisk(\xi\one_A+\eta\one_{A^c}). 
  \end{displaymath}
\end{remark}

\bigskip

The families of selections of random sets are not necessarily law
invariant, i.e. they can differ for two random sets having the same
distribution, see \cite[Sec.~1.4.1]{mo1}. This could result in the selection risk
measure $\rhos$ not being law invariant. Still, the law invariance of
$\vecrisk$ yields the law invariance of the selection risk measure for
convex $\bX$, see \cite{cas:mol14}. Below we consider the case of a
possibly non-convex $\bX$.

The risk measure $\vecrisk$ is said to be \emph{Lebesgue continuous} if it is
continuous on a.s. convergent uniformly $p$-integrably bounded
sequences of random vectors. 

\begin{theorem}
  \label{thr:law-inv}
  Assume that the probability space is non-atomic and that $\vecrisk$
  is a Lebesgue continuous risk measure.  Then the selection risk
  measure $\rhos(\bX)$ is law invariant on $p$-integrably
  quasi-bounded portfolios.
\end{theorem}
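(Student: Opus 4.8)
The plan is to fix a $p$-integrably quasi-bounded lower random closed set $\bX$ and a second lower random closed set $\bY$ with the same distribution, and show $\rhos(\bX)=\rhos(\bY)$. Since $\partial^+\bX$ is $p$-integrable and closed, the discussion in Section~\ref{sec:choice-selections} lets me compute $\rhos(\bX)$ using only selections of $\partial^+\bX$, and likewise for $\bY$; moreover $\partial^+\bX$ and $\partial^+\bY$ are identically distributed because the Pareto boundary is a measurable functional of the set. So it suffices to show that for every $\xi\in\Lp(\partial^+\bX)$ there is a sequence $\eta_n\in\Lp(\partial^+\bY)$ with $\liminf\vecrisk(\eta_n)\le\vecrisk(\xi)$, and symmetrically. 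Because $\vecrisk$ is law invariant, the target $\vecrisk(\xi)$ depends only on the law of $\xi$; the real content is therefore a \emph{coupling} statement: given a selection of $\partial^+\bX$, produce a selection of $\partial^+\bY$ with (approximately) the same distribution.

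First I would reduce to the case where $\xi$ takes finitely many values. Since $\|\partial^+\bX\|\in\Lp$, a selection $\xi$ of $\partial^+\bX$ can be approximated in $\Lp$ (and a.s., along a subsequence, by a uniformly $p$-integrably bounded sequence) by simple random vectors; these approximants need not be selections of $\partial^+\bX$, but they are selections of the lower set $\bX$ after projecting each value down coordinatewise if necessary, and Lebesgue continuity of $\vecrisk$ ensures the risks converge. So it is enough to handle a simple selection $\xi=\sum_{k=1}^N x_k\one_{A_k}$ of $\bX$ with $\{A_k\}$ a measurable partition. The key step is then: using non-atomicity of the probability space, and the fact that $\bX\overset{d}{=}\bY$, realise a partition $\{B_k\}$ and points $x_k\in\bY$ on $B_k$ with $\Prob{B_k}=\Prob{A_k}$, so that $\eta=\sum_k x_k\one_{B_k}$ is a selection of $\bY$ identically distributed to $\xi$; then $\vecrisk(\eta)=\vecrisk(\xi)$ by law invariance of $\vecrisk$. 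This matching is where I expect to invoke a selection/coupling argument: on a non-atomic space one can construct, for the random closed set $\bY$, a selection whose finite-dimensional ``profile'' matches a prescribed one compatible with the distribution of $\bY$, because $\bX\overset{d}{=}\bY$ guarantees that the event of ``$\bY$ contains a point with property $P$'' has the right probability for each relevant $P$.

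The main obstacle is precisely this coupling step. The subtlety is that the family of selections of a random closed set is genuinely not law invariant (the excerpt stresses this), so one cannot simply transport a selection of $\bX$ to a selection of $\bY$; one must instead exploit that it is only the \emph{law} of $\vecrisk(\xi)$ that matters, and that law is controlled by the law of the pair (value of the selection, set), which \emph{is} preserved. Concretely, I would argue that for a simple selection $\xi$ of $\bX$ the push-forward law of $(\bX,\xi)$ on (closed sets)$\times\R^d$ has $\bX$-marginal equal to the law of $\bY$; by a disintegration and the non-atomicity of $(\Omega,\salg,\P)$, this conditional law of the selection given the set can be re-realised on the $\bY$ side, yielding $\eta\in\Lp[0](\bY)$ with $(\bY,\eta)\overset{d}{=}(\bX,\xi)$, hence $\eta\overset{d}{=}\xi$ and $\eta\in\Lp(\bY)$ by the uniform bound $\|\partial^+\bY\|\overset{d}{=}\|\partial^+\bX\|\in\Lp$. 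Lebesgue continuity then upgrades the simple-function case to general selections, and symmetry in $\bX,\bY$ finishes the proof.
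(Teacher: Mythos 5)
Your proposal follows the same skeleton as the paper's proof: pass to the ($p$-integrably bounded) Pareto boundary, transport selections from $\bX$ to the identically distributed $\bX'$, and use Lebesgue continuity to close the argument. The genuine gap is the transport step. You claim that by disintegrating the law of $(\bX,\xi)$ and invoking non-atomicity one can produce $\eta\in\Lp[0](\bX')$ with $(\bX',\eta)\overset{d}{=}(\bX,\xi)$. Non-atomicity of $(\Omega,\salg,\P)$ does not supply the randomisation this needs: if $\sigma(\bX')=\salg$, every selection of $\bX'$ is a measurable function of $\bX'$, so its conditional law given $\bX'$ is degenerate, while the conditional law of $\xi$ given $\bX$ need not be. Even the weaker marginal matching $\eta\overset{d}{=}\xi$ (which is all you actually use, via law invariance of $\vecrisk$) can fail exactly. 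Concretely, on $\Omega=[0,1]$ with Lebesgue measure let $\bX'(\omega)=\{(0,1+\omega),(1+\omega,0)\}+\R_-^2$ and $\bX=\bX'\circ h$ with $h(\omega)=2\omega\bmod 1$; these are identically distributed and quasi-bounded. The selection $\xi$ of $\partial^+\bX$ equal to $(0,1+h(\omega))$ for $\omega<1/2$ and to $(1+h(\omega),0)$ otherwise has, conditionally on $\bX$, a fair two-point law; any selection of $\partial^+\bX'$ has the form $(0,1+\omega)\one_A+(1+\omega,0)\one_{A^c}$, and matching the law of $\xi$ would force $A$ to have Lebesgue density $1/2$ at every point, which the Lebesgue density theorem forbids. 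So exact coupling, joint or marginal, is simply not available; note also that this obstruction already occurs for a simple (two-valued) selection, so your reduction to simple selections does not remove it.

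What is true, and what the paper uses, is an approximation statement: by \cite[Th.~1.4.3]{mo1} the weak closures of the families of selections of the identically distributed sets $\cl\partial^+\bX$ and $\cl\partial^+\bX'$ coincide, so one only obtains a sequence $\eta_n\in\Lp(\cl\partial^+\bX')$ converging weakly to $\xi$. The paper then exploits the $p$-integrable bound $\|\cl\partial^+\bX'\|$ to pass to a subsequence converging almost surely and applies Lebesgue continuity to conclude $\vecrisk(\eta_{n_k})\to\vecrisk(\xi)$; closedness and upperness of $\rhos(\bX')$ finish the proof. This is also why Lebesgue continuity, rather than law invariance of $\vecrisk$, is the operative hypothesis here: your argument leans on law invariance of $\vecrisk$, which is not even among the stated assumptions of the theorem. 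To repair your proof you would need to replace the exact coupling by this weak-closure approximation and then justify a mode of convergence of the approximating selections strong enough for the Lebesgue continuity of $\vecrisk$ to apply.
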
 
\begin{proof}
  Let $\bX$ and $\bX'$ share the same distribution, so that the
  corresponding closures $\bY=\cl\partial^+\bX$ and
  $\bY'=\cl\partial^+\bX'$ of their Pareto optimal points are
  $p$-integrably bounded and share the same distribution. By the
  Lebesgue property and the $p$-integrable boundedness of
  $\bY$ and $\bY'$, it is possible to take the
  union in \eqref{eq:1} over $p$-integrable selections of
  $\bY$ and $\bY'$ respectively.
  
  Let $x\in\vecrisk(\xi)+\R_+^d$ for $\xi\in\Lp(\bY)$. Since
  the weak closures of $\Lp[0](\bY)$ and
  $\Lp[0](\bY')$ coincide (see \cite[Th.~1.4.3]{mo1}), there
  is a sequence $\eta_n\in\Lp(\bY')$ converging weakly to
  $\xi$. Then $\|\eta_n\|\leq \|\bY'\|$, and the latter
  random variable is integrable. Thus, $\{\eta_n,n\geq1\}$ is
  relatively compact in $\Lp[1](\R^d)$. By passing to a subsequence,
  it is possible to assume that $\eta_{n_k}\to\xi$ almost surely.

  The Lebesgue continuity property yields that
  $\vecrisk(\eta_{n_k})\to \vecrisk(\xi)$.  Thus,
  $\vecrisk(\xi)\in\rhos(\bY')$, since the latter set is
  closed. Finally, $x\in\rhos(\bX')$, since the latter set is upper.
\end{proof}

It is known that each $\Lp$-risk measure with finite values and
$p\in[1,\infty)$ is Lebesgue continuous, see \cite{kain:rues09}.  For
$p=\infty$, \cite[Thms.~2.4, 5.2]{jouin:sch:touz06} provide equivalent
formulations of the Lebesgue continuity property for convex risk
measures. We give below another criterion.

\begin{proposition}
  \label{prop:leb-p}
  Assume that $\vecrisk$ is a coherent $\Lp[\infty]$-risk measure such that 
  \begin{displaymath}
    \vecrisk(\xi)=\sup_{\zeta\in\sZ} \E(-\zeta\xi),
  \end{displaymath}
  where $\sZ$ is a uniformly integrable subset of
  $\Lp[1](\R_+^d)$. 
  Then $\vecrisk$ is Lebesgue continuous.
\end{proposition}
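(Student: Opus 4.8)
The plan is to show that the coherent risk measure $\vecrisk$ generated by a uniformly integrable set $\sZ\subset\Lp[1](\R_+^d)$ is continuous along sequences $\xi_n\to\xi$ a.s.\ that are uniformly bounded in $\Lp[\infty]$, say $\|\xi_n\|\le M$ a.s.\ for all $n$. Working coordinatewise, it suffices to treat a single coherent functional $\risk(\eta)=\sup_{\zeta\in\sZ}\E(-\zeta\eta)$ with $\sZ$ uniformly integrable in $\Lp[1](\R_+)$, and to prove $\risk(\eta_n)\to\risk(\eta)$ whenever $\eta_n\to\eta$ a.s.\ with $|\eta_n|\le M$.

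First I would bound the oscillation: for each $\zeta\in\sZ$,
\begin{displaymath}
  \big|\E(-\zeta\eta_n)-\E(-\zeta\eta)\big|\le \E\big(\zeta|\eta_n-\eta|\big),
\end{displaymath}
and since $\risk(\eta_n)-\risk(\eta)$ is a difference of suprema, $|\risk(\eta_n)-\risk(\eta)|\le \sup_{\zeta\in\sZ}\E(\zeta|\eta_n-\eta|)$. So the whole claim reduces to showing
\begin{displaymath}
  \sup_{\zeta\in\sZ}\E\big(\zeta|\eta_n-\eta|\big)\longrightarrow 0.
\end{displaymath}
The key tool is uniform integrability of $\sZ$: given $\eps>0$, pick $L$ so large that $\sup_{\zeta\in\sZ}\E(\zeta\,\one\{\zeta>L\})<\eps$. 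Split $\E(\zeta|\eta_n-\eta|)$ over $\{\zeta\le L\}$ and $\{\zeta>L\}$. On $\{\zeta>L\}$ the integrand is at most $2M\zeta\one\{\zeta>L\}$, contributing at most $2M\eps$ uniformly in $\zeta$ and $n$. On $\{\zeta\le L\}$ it is at most $L\,\E|\eta_n-\eta|$, which tends to $0$ by bounded (dominated) convergence, uniformly in $\zeta$. Letting $n\to\infty$ and then $\eps\to0$ gives the result.

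The main obstacle, such as it is, is purely a matter of getting the uniformity right: the estimate on $\{\zeta\le L\}$ must not depend on $\zeta$ (it does not, since we bounded $\zeta$ by $L$ there), and the tail bound on $\{\zeta>L\}$ must be taken uniformly over $\sZ$, which is exactly what uniform integrability supplies. One should also remark that coherence (positive homogeneity and subadditivity) is what lets us write $\risk$ as a supremum with no penalty term and, more importantly, guarantees $\vecrisk(0)=0$ and finiteness on $\Lp[\infty]$, consistent with $\sup_{\zeta\in\sZ}\E\zeta<\infty$ implied by uniform integrability. Finally, I would note that a.s.\ convergence of a uniformly bounded sequence in $\Lp[\infty]$ is the defining regime of Lebesgue continuity in the text, so once the displayed limit above is established the proposition follows directly.
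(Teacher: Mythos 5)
Your argument is correct, and it reaches the same conclusion by a slightly different decomposition than the paper. The paper splits the underlying probability space: by Egorov's theorem it isolates an event $A$ with $\P(A)\leq\eps$ off which $\xi_n\to\xi$ uniformly, bounds the contribution of $A^c$ by $\sup_{\zeta\in\sZ}\E\|\zeta\|$ times the uniform distance, and controls the contribution of $A$ by the $\eps$--$\delta$ (uniform absolute continuity) form of uniform integrability of $\sZ$. You instead truncate the dual variable, splitting $\E(\zeta|\eta_n-\eta|)$ over $\{\zeta\leq L\}$ and $\{\zeta>L\}$, handling the first piece by dominated convergence (uniform in $\zeta$ since $\zeta\leq L$ there) and the second by the tail characterisation $\sup_{\zeta\in\sZ}\E(\zeta\one\{\zeta>L\})\to0$ of uniform integrability. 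The two devices are equivalent characterisations of uniform integrability, so both proofs rest on the same hypothesis; yours has the small advantage of avoiding Egorov's theorem and making the uniformity in $\zeta$ completely mechanical, while the paper's keeps the estimate in vector form throughout rather than reducing coordinatewise. Your initial reduction $|\risk(\eta_n)-\risk(\eta)|\leq\sup_{\zeta\in\sZ}\E(\zeta|\eta_n-\eta|)$ via the difference-of-suprema inequality is exactly the paper's first step, and your closing remarks on finiteness and on the definition of Lebesgue continuity are consistent with the text. No gaps.
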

\begin{proof}
  Assume that $\xi_n\to\xi$ a.s. and $\|\xi_n\|\leq c$ a.s. for all
  $n$ and $c>0$.  By Egorov's theorem, for each $\eps>0$, there is an
  event $A$ of probability at most $\eps$ such that $\xi_n\to\xi$
  uniformly on the complement $A^c$ of $A$.

  Using the fact that the absolute value of the difference of two
  suprema is bounded by the suprema of the absolute values of the
  differences, we have 
  \begin{displaymath}
    \|\vecrisk(\xi_n)-\vecrisk(\xi)\|
    \leq \sup_{\zeta\in\sZ} \|\E(-\zeta(\xi_n-\xi)\|
    \leq \sup_{\zeta\in\sZ}\E\|\zeta\| \sup_{\omega\notin
      A} \|\xi_n(\omega)-\xi(\omega)\|+2c\|\E(-\zeta\one_A)\|.
  \end{displaymath}
  The first term on the right-hand side converges to zero by the
  uniform convergence on $A^c$, while the second converges to zero by
  the uniform integrability of $\sZ$.
\end{proof}

\section{Fixed transaction costs}
\label{sec:fixed-trans-costs}

\subsection{Bounds on the selection risk measure}
\label{sec:general-properties}

Assume that the components of $C$ represent the same currency and
transfers are not restricted, but whenever a transfer is made, a fixed
cost $\kappa>0$ is incurred. If $C$ is the capital position, then the
corresponding set of attainable positions is given by
$\bX=C+\bI_\kappa$ with non-convex set
\begin{displaymath}
  \bI_\kappa=\R_-^d\cup H_{-\kappa}
\end{displaymath}
of portfolios available at price zero, where
\begin{displaymath}
  H_t=\{x\in\R^d:\; \sum_{i=1}^d x_i\leq t\}, \quad t\in\R.
\end{displaymath}
The following bounds for the selection risk measure of $C+\bI_\kappa$ are
straightforward.

\begin{proposition}
  \label{prop:bounds}
  We have
  \begin{equation}
    \label{eq:7}
    (\vecrisk(C)-\bI_\kappa)\cup \rhos(C+H_{-\kappa})
    \; \subseteq\; \rhos(C+\bI_\kappa)\;\subseteq\; \rhos(C+H_0).
  \end{equation}
\end{proposition}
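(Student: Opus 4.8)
The plan is to prove the three inclusions in \eqref{eq:7} separately, each following from monotonicity of $\rhos$ (Theorem~\ref{thr:general}(i)) together with cash-invariance (Theorem~\ref{thr:general}(ii)) and the elementary set-theoretic relations among $\R_-^d$, $H_{-\kappa}$, $H_0$, and $\bI_\kappa$.

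First, for the rightmost inclusion $\rhos(C+\bI_\kappa)\subseteq\rhos(C+H_0)$: since $\bI_\kappa=\R_-^d\cup H_{-\kappa}$, and both $\R_-^d\subseteq H_0$ (as $\x\le 0$ implies $\sum x_i\le 0$) and $H_{-\kappa}\subseteq H_0$ (as $-\kappa<0$), we have $\bI_\kappa\subseteq H_0$, hence $C+\bI_\kappa\subseteq C+H_0$ a.s., and monotonicity gives the claim.

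Next, for the two pieces on the left. The inclusion $\rhos(C+H_{-\kappa})\subseteq\rhos(C+\bI_\kappa)$ is immediate from $H_{-\kappa}\subseteq\bI_\kappa$ and monotonicity. For $\vecrisk(C)-\bI_\kappa\subseteq\rhos(C+\bI_\kappa)$, I would observe that $\R_-^d\subseteq\bI_\kappa$, so $C+\R_-^d\subseteq C+\bI_\kappa$ a.s., whence by monotonicity $\rhos(C+\R_-^d)\subseteq\rhos(C+\bI_\kappa)$. Now use the identity noted after Theorem~\ref{thr:general}, namely $\rhos(C+\R_-^d)=\vecrisk(C)+\R_+^d$, which holds by the monotonicity of $\vecrisk$; combined with cash-invariance (or simply directly from the definition \eqref{eq:1} by taking constant-in-$\omega$ shifts of a selection, i.e. selections of the form $\xi=C+a$ with $a\in\bI_\kappa$ deterministic), one gets $\vecrisk(C)-\bI_\kappa=\vecrisk(C)+(-\bI_\kappa)$. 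Since $\bI_\kappa$ is a lower set, $-\bI_\kappa$ is an upper set, and in fact $-\bI_\kappa\subseteq\R_+^d+(-\bI_\kappa)$; more to the point, any point of $\vecrisk(C)-\bI_\kappa$ has the form $\vecrisk(C)-a$ with $a\in\bI_\kappa$, and taking the deterministic selection $\xi=C+a\in C+\bI_\kappa$ we have $\vecrisk(\xi)=\vecrisk(C+a)=\vecrisk(C)-a$ by cash-invariance of $\vecrisk$, so $\vecrisk(C)-a\in\vecrisk(\xi)+\R_+^d\subseteq\rhos(C+\bI_\kappa)$. This handles the union on the left.

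I do not anticipate a genuine obstacle here: the statement is labelled "straightforward" and every step reduces to monotonicity plus elementary inclusions of the model cones. The only point requiring a small amount of care is the deterministic-shift argument for $\vecrisk(C)-\bI_\kappa$: one must check that $C+a$ is genuinely a $p$-integrable selection of $C+\bI_\kappa$ for deterministic $a\in\bI_\kappa$ (clear, since $C$ is $p$-integrable and $a$ is a constant) and that cash-invariance of the component risk measures gives $\vecrisk(C+a)=\vecrisk(C)-a$ exactly. Everything else is a one-line appeal to Theorem~\ref{thr:general}.
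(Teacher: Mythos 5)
Your proof is correct and follows essentially the same route as the paper: the right-hand inclusion from $\bI_\kappa\subseteq H_0$ and monotonicity, the middle piece from $H_{-\kappa}\subseteq\bI_\kappa$, and the $\vecrisk(C)-\bI_\kappa$ piece from the deterministic selections $C+a$, $a\in\bI_\kappa$, together with cash-invariance. The only difference is that you spell out the details the paper leaves implicit.
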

\begin{proof}
  The first inclusion follows from the fact that $C+x$ is a selection
  of $C+\bI_\kappa$ for all deterministic $x\in\bI_\kappa$ and that
  $H_{-\kappa}\subset\bI_\kappa$. The second inclusion holds, since
  $\bI_\kappa\subset H_0$.
\end{proof}

\begin{example}
  \label{ex:genC}
  The inclusion on the left-hand side of \eqref{eq:7} can be strict.
  Let $d=2$, and let $C$ be $(-1,1)$ with probability $\alpha$ and
  $(0,0)$ otherwise. For any $0\leq \beta \leq \alpha$, we can define
  a selection $\xi\in\bI_\kappa$ such that $C+\xi$ equals
  $(-\kappa,0)$ with probability $\beta$, $(-1,1)$ with probability
  $\alpha-\beta$, and $(0,0)$ with probability $1-\alpha$. Taking the
  risk measure of such selections shows that $\rhos(C+\bI_\kappa)$
  contains all points on the segments with end-points $(1,0)$ and
  $(\kappa,0)$.
\end{example}

The following result provides rather simple bounds on the selection
risk measure in case of fixed transaction costs. 

\begin{proposition}
 \begin{enumerate}[i)]
 \item If $\kappa_1\leq \kappa_2$ and $C_1\geq C_2$ componentwisely,
   then 
   \begin{displaymath}
     \rhos(C_1+\bI_{\kappa_1})\supseteq
     \rhos(C_2+\bI_{\kappa_2}).
   \end{displaymath}
 \item If $\vecrisk$ is subadditive, then
   \begin{displaymath}
     \rhos(C_1+C_2+\bI_\kappa)\supseteq \rhos(C_1+\bI_{\kappa_1}) +
     \rhos(C_2+\bI_{\kappa_2})
   \end{displaymath}
   whenever $\kappa \leq \min(\kappa_1,\kappa_2)$.
 \end{enumerate}
\end{proposition}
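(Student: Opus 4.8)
The plan is to verify each of the two monotonicity/subadditivity-type statements by exhibiting, for every selection appearing on the right-hand side, a corresponding selection of the set on the left-hand side whose componentwise risk is no larger.

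For part i), I would first treat the two effects separately. Since $\kappa_1\le\kappa_2$, we have $\bI_{\kappa_2}\subseteq\bI_{\kappa_1}$ because $H_{-\kappa_2}\subseteq H_{-\kappa_1}$, so $C_2+\bI_{\kappa_2}\subseteq C_2+\bI_{\kappa_1}$ and monotonicity (Theorem~\ref{thr:general}, part i)) gives $\rhos(C_2+\bI_{\kappa_2})\subseteq\rhos(C_2+\bI_{\kappa_1})$. Then, since $C_1\ge C_2$ componentwise and $\bI_{\kappa_1}$ is a lower set (closed under subtracting elements of $\R_+^d$, as $\R_-^d\cup H_{-\kappa_1}$ clearly is), we get $C_2+\bI_{\kappa_1}\subseteq C_1+\bI_{\kappa_1}$: indeed any $C_2+x$ with $x\in\bI_{\kappa_1}$ equals $C_1+(x-(C_1-C_2))$ and $x-(C_1-C_2)\le x$ lies in $\bI_{\kappa_1}$. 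Applying monotonicity once more and composing the two inclusions finishes part i).

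For part ii), I would take an arbitrary element of the right-hand side, which by definition of $\rhos$ and the sum is dominated componentwise by $\liminf\big(\vecrisk(\xi_n)+\vecrisk(\eta_n)\big)$ for selections $\xi_n\in\Lp(C_1+\bI_{\kappa_1})$ and $\eta_n\in\Lp(C_2+\bI_{\kappa_2})$. Writing $\xi_n=C_1+\zeta_n$ and $\eta_n=C_2+\vartheta_n$ with $\zeta_n\in\Lp(\bI_{\kappa_1})$, $\vartheta_n\in\Lp(\bI_{\kappa_2})$, the candidate selection of $C_1+C_2+\bI_\kappa$ is $\xi_n+\eta_n=C_1+C_2+(\zeta_n+\vartheta_n)$; the key point is that $\zeta_n+\vartheta_n\in\bI_\kappa$ whenever $\kappa\le\min(\kappa_1,\kappa_2)$. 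This holds because $\bI_{\kappa_1}+\bI_{\kappa_2}\subseteq\bI_{\min(\kappa_1,\kappa_2)}\subseteq\bI_\kappa$: a sum of two elements of $\R_-^d$ stays in $\R_-^d$; a sum of an element of $\R_-^d$ with one of $H_{-\kappa_i}$ has coordinate sum at most $-\kappa_i\le-\kappa$, hence lies in $H_{-\kappa}$; and a sum of an element of $H_{-\kappa_1}$ with one of $H_{-\kappa_2}$ has coordinate sum at most $-\kappa_1-\kappa_2\le-\kappa$. Once $\xi_n+\eta_n$ is identified as a genuine selection of $C_1+C_2+\bI_\kappa$, subadditivity of $\vecrisk$ gives $\vecrisk(\xi_n+\eta_n)\le\vecrisk(\xi_n)+\vecrisk(\eta_n)$ componentwise, so $\liminf\vecrisk(\xi_n+\eta_n)\le\liminf(\vecrisk(\xi_n)+\vecrisk(\eta_n))$, placing the chosen element in $\rhos(C_1+C_2+\bI_\kappa)$.

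The only mildly delicate points are bookkeeping ones: making sure the $\liminf$ along a common subsequence behaves as claimed (standard, since we only need a componentwise upper bound and the limit superset $\rhos$ is upper and closed), and confirming that $p$-integrability is preserved under the sum $\xi_n+\eta_n$, which is immediate. The genuine content is the containment $\bI_{\kappa_1}+\bI_{\kappa_2}\subseteq\bI_\kappa$ for $\kappa\le\min(\kappa_1,\kappa_2)$, which I expect to be the main step but is itself a short case check on the three combinations of the two pieces $\R_-^d$ and $H_{-\kappa_i}$ making up each $\bI_{\kappa_i}$.
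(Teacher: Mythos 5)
Your proof is correct and follows essentially the same route as the paper: part i) via the inclusions $C_2+\bI_{\kappa_2}\subseteq C_2+\bI_{\kappa_1}\subseteq C_1+\bI_{\kappa_1}$ together with monotonicity, and part ii) via the containment $\bI_{\kappa_1}+\bI_{\kappa_2}\subseteq\bI_{\kappa}$ plus subadditivity of $\vecrisk$ applied to sums of selections. You merely spell out details the paper leaves implicit (the lower-set argument for $C_2+\bI_{\kappa_1}\subseteq C_1+\bI_{\kappa_1}$ and the superadditivity of $\rhos$ under Minkowski sums), which is fine.
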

\begin{proof}
  i) Note that $\bI_{\kappa_1}\supseteq \bI_{\kappa_2}$ for $\kappa_1
  \leq \kappa_2$, and
  \begin{displaymath}
    C_1+\bI_{\kappa_1}\supseteq C_1+\bI_{\kappa_2}\supseteq C_2+\bI_{\kappa_2}.
  \end{displaymath}
  
  ii) follows from $\bI_{\kappa_1}+\bI_{\kappa_2} \subseteq
  \bI_\kappa$ and the monotonicity of the selection risk measure.
\end{proof}

The following result identifies the selection risk measure of $C+H_t$
in some cases in terms of the risk of the total payoff
\begin{displaymath}
  D=C^{(1)}+\cdots+C^{(d)}.
\end{displaymath}
If $\vecrisk$ is coherent with all identical components, it is easy to
see that $C+H_t$ is acceptable if and only if $D$ is acceptable. 
The following result concerns the case, when all but one components of
$\vecrisk$ are identical.

\begin{proposition}
  \label{prop:Ht}
  \begin{enumerate}[i)]
  \item If all the components of $\vecrisk$ are identical convex risk
    measures $\risk$, then
    \begin{displaymath}
      \rhos(C+H_t)=-H_{t-d\risk(D/d)}.
    \end{displaymath}
  \item If one of the components of $\vecrisk$ is the negative
    essential infimum and all others are identical convex risk
    measures $\risk$, then
    \begin{displaymath}
      \rhos(C+H_t)=-H_{t-(d-1)\risk(\frac{D}{d-1})}.
    \end{displaymath}
  \item If one of components of $\vecrisk$ is the negative expectation
    and all others are identical convex risk measures $\risk$ such
    that $\risk(\xi)\geq -\E\xi$ for all $\xi\in\Lp[1](\R)$, then
    \begin{displaymath}
      \rhos(C+H_t)=-H_{t-\E D}.
    \end{displaymath}
  \end{enumerate}
\end{proposition}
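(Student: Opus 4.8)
The plan is to pin down $\rhos(C+H_t)$ by proving the two inclusions separately, using the explicit description $\Lp(C+H_t)=\{\xi\in\Lp(\R^d):\ \xi^{(1)}+\cdots+\xi^{(d)}\le D+t\ \text{a.s.}\}$, which is immediate from the fact that $z\in C(\omega)+H_t$ holds exactly when $\sum_i z_i\le\sum_i C^{(i)}(\omega)+t$.

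\emph{The inclusion ``$\subseteq$''.} Fix a selection $\xi\in\Lp(C+H_t)$ and bound the coordinate sum $\risk_1(\xi^{(1)})+\cdots+\risk_d(\xi^{(d)})$ from below; since the claimed half-space is a closed upper set, this puts $\vecrisk(\xi)+\R_+^d$ inside it, and taking the union over $\xi$ and the closure gives the inclusion. In case i) the key step is the Jensen-type inequality for a convex monetary risk measure, $\risk\bigl(\tfrac1d\sum_i\xi^{(i)}\bigr)\le\tfrac1d\sum_i\risk(\xi^{(i)})$; combining it with monotonicity applied to $\tfrac1d\sum_i\xi^{(i)}\le\tfrac{D+t}{d}$ and with cash invariance yields $\tfrac1d\sum_i\risk(\xi^{(i)})\ge\risk(D/d)-t/d$, i.e.\ $\vecrisk(\xi)\in-H_{t-d\risk(D/d)}$. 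Case ii) is the same argument applied only to the $d-1$ convex coordinates after replacing the remaining coordinate $\xi^{(d)}$ by its essential infimum $m$ in the budget constraint; the extra $+m$ that appears is cancelled by the last coordinate of $\vecrisk(\xi)$, namely $-\essinf\xi^{(d)}=-m$, leaving the bound with $d-1$ in place of $d$. In case iii) one uses the hypothesis $\risk\ge-\E$ on the $d-1$ convex coordinates together with the fact that the last coordinate already equals $-\E\xi^{(d)}$, so the coordinate sum is at least $-\E\bigl(\sum_i\xi^{(i)}\bigr)\ge-\E(D+t)$, again placing $\vecrisk(\xi)$ in the corresponding half-space.

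\emph{The inclusion ``$\supseteq$''.} For an arbitrary point $x$ of the claimed half-space I would exhibit a single selection $\xi\in\Lp(C+H_t)$ with $\vecrisk(\xi)\le x$ coordinatewise, so that $x\in\vecrisk(\xi)+\R_+^d\subseteq\rhos(C+H_t)$. The selections are affine in $D$ with free additive constants: in case i) take $\xi^{(i)}=\tfrac{D+t}{d}+c_i$ with $\sum_i c_i=0$, and note that the inequality defining $x$ is exactly what lets one choose the $c_i$ so that $\risk(\xi^{(i)})=\risk(D/d)-t/d-c_i\le x_i$ for every $i$. In case ii) freeze $\xi^{(d)}=-x_d$ (so $-\essinf\xi^{(d)}=x_d$) and split the remaining budget $D+t+x_d$ over the first $d-1$ coordinates as before. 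In case iii), since $\risk\ge-\E$ makes the convex coordinates the ``expensive'' ones, freeze them at $\xi^{(i)}=-x_i$ for $i<d$ (so $\risk(\xi^{(i)})=x_i$ by cash invariance and $\risk(0)=0$) and route the whole remaining budget through the last coordinate, $\xi^{(d)}=D+t+\sum_{i<d}x_i$; then $-\E\xi^{(d)}\le x_d$ is equivalent to the inequality defining the half-space. In all cases $\xi$ is $p$-integrable because $D$ is, and all risks involved are finite by the standing assumption on $\vecrisk$.

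\emph{Main obstacle.} There is no deep step; the work is bookkeeping about which coordinate absorbs the slack — the convex coordinates share $D$ uniformly, an essential-infimum coordinate can be made constant and then contributes nothing (so $d$ effectively drops to $d-1$), and the expectation coordinate, being the least conservative, should carry the entire payoff while the convex coordinates are pinned down by $x$. The one point needing care is that the explicit selections in the ``$\supseteq$'' part attain the boundary of the half-space rather than merely its interior, which is why the free constants are chosen to saturate the relevant equality; the only genuine inequalities used along the way are the Jensen-type inequality for convex monetary risk measures, monotonicity, cash invariance, $\essinf\le\E$, and the hypothesis $\risk\ge-\E$ in case iii.
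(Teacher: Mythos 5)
Your proof is correct and takes essentially the same route as the paper's: the lower bound on the coordinate sum of $\vecrisk(\xi)$ via convexity (Jensen) plus monotonicity and cash invariance, and explicit affine-in-$D$ selections saturating that bound; the paper merely phrases this as an acceptability criterion for $0\in\rhos(C+H_0)$ and recovers general points by cash invariance (and cites an external reference for part i)). One remark: in part iii) your computation — exactly like the paper's own proof, which establishes $0\in\rhos(C+H_0)$ if and only if $\E D\ge 0$ — actually yields $\rhos(C+H_t)=\{x:\sum_{i=1}^d x_i\ge -\E D-t\}=-H_{t+\E D}$, which is also what part i) gives upon substituting $\risk=-\E$; the displayed formula $-H_{t-\E D}$ in the statement carries a sign error that your argument silently corrects, so you should state the target half-space explicitly rather than calling it ``the corresponding half-space.''
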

\begin{proof}
  By cash-invariance, it is possible to asssume that $t=0$. The
  statement i) is shown in \cite[Th.~5.1]{haier:mol:sch15}.
  \medskip
  
  \noindent
  ii) Assume that the first component of $\vecrisk$ is the negative
  essential infimum. Note that $0\in\rhos(C+H_0)$ if and only if there
  is a selection $\xi$ such that $\sum_{i=1}^{d} \xi^{(i)}\leq 0$,
  $C^{(1)}+\xi^{(1)}\geq 0$ a.s. and $\risk(C^{(i)}+\xi^{(i)})\leq 0$
  for $i=2,\dots, d$. By convexity and monotonicity of $\risk$,
  \begin{align*}
    \label{eq:diagonal}
    \risk\big(\frac{D}{d-1}\big)
    =\risk\Big(\frac{1}{d-1}\sum_{i=1}^{d} C^{(i)}\Big)
    &\leq\risk\Big(\frac{1}{d-1}\sum_{i=2}^{d} (C^{(i)}+\xi^{(i)})\Big)\\
    &\leq \sum_{i=2}^{d} \frac{1}{d-1}\risk(C^{(i)}+\xi^{(i)}).
  \end{align*}
  Hence, if $\risk(C^{(i)}+\xi^{(i)})\leq 0$ for all $i=2,\dots,d$, then $0\in
  -H_{-(d-1)\risk(\frac{D}{d-1})}$. On the other hand, if
  $\risk(\frac{D}{d-1})\leq 0$, then letting
  $\xi^{(1)}=-C^{(1)}$ and $\xi^{(i)}=-C^{(i)}+D/(d-1)$, $i=2,\dots,d$, yields a
  selection $\xi$ of $C+H_0$ such that $\vecrisk(C+\xi)\leq 0$.

  \medskip
  \noindent
  iii) 
  If $0\in \rhos(C+H_0)$, then there is $\xi$ such that
  $\E(C^{(1)}+\xi^{(1)})\geq 0$ and $\risk(C^{(i)}+\xi^{(i)})\leq 0$,
  $i=2,\dots,d$. Denote $\eta=D-C^{(1)}-\xi^{(1)}$. 
  Since $\sum_{i=2}^{d}\xi^{(i)}\leq -\xi^{(1)}$, we have 
  \begin{displaymath}
    \sum_{i=2}^{d} (C^{(i)}+\xi^{(i)}) 
    = D-C^{(1)}+ \sum_{i=2}^{d} \xi^{(i)}
    \leq \eta.
  \end{displaymath}
  Thus,
  \begin{equation}
    \label{eq:diagonal2}
    \risk\big(\eta/(d-1)\big)
    \leq \frac{1}{d-1} 
    \risk\big(\sum_{i=2}^{d} (C^{(i)}+\xi^{(i)})\big)
    \leq \frac{1}{d-1}\sum_{i=2}^{d} \risk(C^{(i)}+\xi^{(i)}) \leq 0. 
  \end{equation}
  Note that $\E(C^{(1)}+\xi^{(1)})\geq 0$ is equivalent to $\E\eta\leq
  \E D$. Inequality \eqref{eq:diagonal2} yields that
  $-\E\eta\leq\risk(\eta)\leq 0$. Therefore, $0\leq \E\eta\leq \E D$
  as desired.

  If $\E D\geq 0$, define a selection of $C+H_0$ by letting
  $\xi^{(1)}=-C^{(1)}+D$ and $\xi^{(i)}=-C^{(i)}$, $i=2,\dots,d$. Then
  $\E(C^{(1)}+\xi^{(1)})\geq 0$ and $C^{(i)}+\xi^{(i)}=0$ for
  $i=2,\ldots,d$, whence $0\in \rhos(C+H_0)$.
\end{proof}

\subsection{Fixed transaction costs in case $C=0$}
\label{sec:fixed-trans-costs-1}

If $C=0$, then the portfolio $\bX=C+\bI_\kappa=\bI_\kappa$ is
deterministic. However, in the non-convex case, $\rhos(\bI_\kappa)$
may be a strict superset of $(-\bI_\kappa)$. For instance, this
happens in the context of Theorem~\ref{thr:convexify} when
$\rhos(\bI_\kappa)=-H_0=\conv(-\bI_\kappa)$.

In the following assume that $\vecrisk$ is a coherent risk measure and
$d=2$. By Proposition~\ref{prop:det-set}, it suffices to consider
selections $\xi=(x,y)\one_{A}$ satisfying $x+y=-\kappa$ with
$(x,y)\notin\R_-^2$. If $x\geq 0$ and so $y\leq 0$, then
\begin{displaymath}
  \vecrisk(\xi)=(x\risk_1(\one_A),-y\risk_2(-\one_A)).
\end{displaymath}
If $x<0$, then
\begin{displaymath}
  \vecrisk(\xi)=(-x\risk_1(-\one_A),y\risk_2(\one_A)).
\end{displaymath}
Thus, the risk of $\bI_\kappa$ is determined by the set
\begin{align*}  
  B_\vecrisk=\{(\risk_1(\one_A),\risk_2(-\one_A)):\; A\in\salg\},
\end{align*}
where $\P(A)=\beta$ varies between $0$ and $1$.  Then
\begin{equation}
  \label{eq:88}
  \rhos(\bX)=\bigcup_{t\geq0, (b^{(1)},b^{(2)})\in B_\vecrisk}
  \big\{(tb^{(1)},(-\kappa-t)b^{(2)}),
  (tb^{(2)},(t-\kappa)b^{(1)})\big\}+\R_+^2. 
\end{equation}

\begin{example}
  \label{ex:c-zero}
  Let $d=2$, and let the both components of $\vecrisk=(\risk,\risk)$
  be the Average Value-at-Risk at level $\alpha$. If $\P(A)=\beta$,
  then
  \begin{displaymath}
    (\risk(\one_A),\risk(-\one_A))=
    \begin{cases}
      (0,\beta/\alpha),& \beta\leq \min\big(\alpha,1-\alpha),\\
      (0,1),& \alpha<\beta\leq 1-\alpha,\;
      \alpha\leq 1/2,\\
      (-1+(1-\beta)/\alpha,\beta/\alpha),&
      1-\alpha<\beta\leq \alpha,\;
      \alpha> 1/2,\\
      (-1+(1-\beta)/\alpha,1),& \max(\alpha,1-\alpha)<\beta\leq1.
    \end{cases}
  \end{displaymath}
  Thus, if $\alpha\leq 1/2$, then $B_\vecrisk$ is the union of two
  segments $[(0,0),(0,1)]$ and $[(0,1),(-1,1)]$ and it does not
  depend on $\alpha$. In this case, \eqref{eq:88} yields that
  $\rhos(\bI_\kappa)=-\bI_\kappa$.

  Assume now that $\alpha>1/2$. Then $B_\vecrisk$ is the line that
  joins the points $(0,0)$, $(0,1/\alpha-1)$, $(1/\alpha-2,1)$ and
  $(-1,1)$. Only the middle segment differs from the case $\alpha\leq
  1/2$. If $t>0$, then the points
  \begin{displaymath}
    \{(tb^{(1)},(t-\kappa)b^{(2)}):\; 
    (b^{(1)},b^{(2)})\in B_\vecrisk\}
  \end{displaymath}
  constitute the segment with the end-points
  $(0,(t-\kappa)(1/\alpha-1))$ and $(t(1/\alpha-2),t-\kappa)$. A
  calculation of the lower envelope of these segments yields that
  \begin{multline*}
    \rhos(\bI_\kappa)=\Big\{(-x,y):\; x\geq 0,
    y\geq\min\Big(\kappa+x,\big(\sqrt{x}+\sqrt{\kappa(1/\alpha-1)}\big)^2\Big)\Big\}\\
    \bigcup
    \Big\{(x,-y):\; y\geq 0,
    x\geq\min\Big(\kappa+y,\big(\sqrt{y}+\sqrt{\kappa(1/\alpha-1)}\big)^2\Big)\Big\}.
  \end{multline*}
  Figure~\ref{fig:1} shows the risk of $\bI_\kappa$ for $\alpha=0.75$.
  This set increases as $\alpha$ grows and becomes
  $\conv(-\bI_\kappa)$ if $\alpha=1$.

  \begin{figure}
    \centering
    \includegraphics[angle=-90,width=7cm,totalheight=8cm]{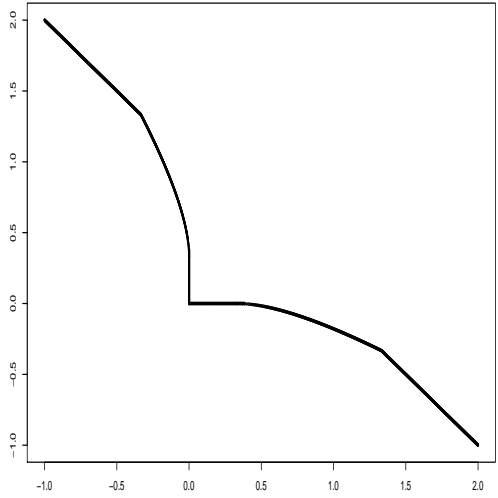}
    \caption{The lower left boundary of the set $\rhos(\bI_\kappa)$
      for $\kappa=1$ and $\alpha=0.75$.}
    \label{fig:1}
  \end{figure}
\end{example}

\section{Finite sets of admissible transactions}
\label{sec:finite-trans-sets}

We consider another special case when the selection risk measure of a
non-convex set can be calculated explicitly. Assume that possible
transactions are restricted to belong to a finite deterministic set
$M$ in $\R^d$, that is,
\begin{displaymath}
  \bX=C+M+\R_-^d.
\end{displaymath}
Let $\vecrisk$ have all components $\risk$ being the distortion risk
measure \eqref{eq:77} with distortion function $g$.  Since the
analytical calculation of $\rhos(\bX)$ is not feasible, it is possible
to use \eqref{eq:10} to arrive at the bound
\begin{displaymath}
  \rhos(C+M+\R_-^d)\supseteq \vecrisk(C)+\rhos(M+\R_-^d). 
\end{displaymath}
In the following we determine the last term on the right-hand side in
dimension $d=2$.

\begin{example}
  \label{ex:two-point}
  Consider the case of a two-point set $M$.  By translating, it is
  always possible to assume that $0\in M$.  If $M$ consists of two
  points $(0,0)$ and $(x,y)$ with $xy<0$, then $\rhos(\bX)$ is
  determined by the set of values $\vecrisk((x,y)\one_A)$ for all
  $A\in\salg$. Without loss of generality assume that $x>0$ and
  $y<0$. Since $\risk(\one_A)=-g(\beta)$ and
  $\risk(-\one_A)=1-g(1-\beta)=\tilde{g}(\beta)$ if $\P(A)=\beta$, we have
  \begin{displaymath}
    \rhos(M+\R_-^2)=\bigcup_{\beta\in[0,1]}
    \Big(-g(\beta)x,(g(1-\beta)-1)y\Big)+\R_+^2. 
  \end{displaymath}
\end{example}

\begin{example}
  Let $M=\{(x_1,y_1),(x_2,y_2),(x_3,y_3)\}$ consist of three points,
  and assume that $x_1<x_2=0<x_3$ and $y_1>y_2=0>y_3$. In this case,
  possible selections can be either two-points-selections of two of
  these three points (in this case the risk is calculated as in
  Example~\ref{ex:two-point}),
  and three point selection attaining all three points with positive
  probabilities $\alpha_1,\alpha_2,\alpha_3$ such that
  $\alpha_1+\alpha_2+\alpha_3=1$. 
  The risk of the three-point selection can be directly calculated, so
  that 
  \begin{displaymath}
    \rhos(M+\R_-^2)=\bigcup_{\alpha_1+\alpha_3\leq 1,\alpha_1,\alpha_3\geq0}
    \Big(-x_1\tilde{g}(\alpha_1)-x_3g(\alpha_3),
    -y_1g(\alpha_1)-y_3\tilde{g}(\alpha_3)\Big)+\R_+^2.
  \end{displaymath}
\end{example}



\newcommand{\noopsort}[1]{} \newcommand{\printfirst}[2]{#1}
  \newcommand{\singleletter}[1]{#1} \newcommand{\switchargs}[2]{#2#1}


\end{document}